\documentclass[aps,superscriptaddress,twocolumn,showpacs]{revtex4-2}

\usepackage{amsmath}
\usepackage{latexsym}
\usepackage{amssymb}
\usepackage{graphicx}
\usepackage[colorlinks=true, citecolor=blue, urlcolor=blue]{hyperref}
\usepackage{float}
\usepackage{amsfonts}
\usepackage{textcomp}
\usepackage{mathpazo}
\usepackage{blindtext,changes}

\usepackage{bbm}

\usepackage{xcolor}
\definecolor{myurlcolor}{rgb}{0,0,0.4}
\definecolor{mycitecolor}{rgb}{0,0.5,0}
\definecolor{myrefcolor}{rgb}{0.5,0,0}
\usepackage{hyperref}
\hypersetup{colorlinks,
linkcolor=myrefcolor,
citecolor=mycitecolor,
urlcolor=myurlcolor}

\usepackage[draft]{fixme}
\usepackage{amsmath,bbm}
\usepackage{scrextend}
\usepackage{graphicx}
\usepackage{amsfonts}
\usepackage{amssymb}
\usepackage{amsmath,amssymb,amsthm,verbatim,graphicx,bbm}
\usepackage{mathrsfs}
\usepackage{mathtools}
\usepackage{color,xcolor,longtable}

\def\be{\begin{equation}}
\def\ee{\end{equation}}
\def\ben{\begin{eqnarray}}
\def\een{\end{eqnarray}}
\def\eea{\end{array}}
\def\bea{\begin{array}}

\newcommand{\Tr}[1]{\mathrm{Tr}#1}
\newcommand{\bei}{\begin{itemize}}
\newcommand{\eei}{\end{itemize}}
\newcommand{\ket}[1]{|#1\rangle}
\newcommand{\bra}[1]{\langle#1|}

\newcommand{\proj}[1]{\ket{#1}\!\bra{#1}}

\newcommand{\I}{\mathbbm{1}}

\newcommand{\p}{\Vec{p}}

\renewcommand{\emph}[1]{\textbf{#1}}

\makeatletter
\newtheorem*{rep@theorem}{\rep@title}
\newcommand{\newreptheorem}[2]{%
\newenvironment{rep#1}[1]{%
 \def\rep@title{#2 \ref{##1}}%
 \begin{rep@theorem}}%
 {\end{rep@theorem}}}
\makeatother

\theoremstyle{plain}
\newtheorem*{thm*}{Theorem}
\newreptheorem{thm}{Theorem}
\newtheorem{lem}{Lemma}

\theoremstyle{definition}

\theoremstyle{remark}

\usepackage[T1]{fontenc}

\begin{document}


\title{Predictive supremacy of informationally-restricted quantum perceptron}
\author{Shubhayan Sarkar}
\email{shubhayan.sarkar@ug.edu.pl}
\affiliation{Institute of Informatics, Faculty of Mathematics, Physics and Informatics,
University of Gdansk, Wita Stwosza 57, 80-308 Gdansk, Poland}

\begin{abstract}
In the current world, the use of artificial intelligence is penetrating every aspect of human life. The basic element of any artificial intelligence is a digital neuron, called a perceptron, while its quantum analogue is called a quantum perceptron. Here, we introduce a model of perceptron called the informationally-restricted measurement-based perceptron (IMP), where each input is composed of two bits, while at the node, depending on a free input variable, the perceptron decides which bit to evaluate. Additionally, the states transmitted from the input to the node are restricted to a bit (qubit). We establish that under this restriction, the quantum IMP predicts better than a classical IMP. This means that under dimensional restriction of the transmitted states, when both the classical and quantum perceptrons learn the same, the quantum perceptron predicts better than the classical perceptron. For our purpose, we find specific learned values of the perceptron that can display the advantage of a quantum perceptron over its classical counterpart. Restricting to discrete binary inputs, we establish that the observed quantum advantage is universal, that is, for any non-trivial function implementable by both the quantum and classical IMP, one can always find a quantum implementation that outperforms the predictive capability of every classical one. This points to the fact that, given identical learning and resources, a quantum perceptron would predict better than any classical one.
\end{abstract}
\maketitle


\textit{Introduction}--- Artificial intelligence (AI) seeks to build machines that can learn from data, recognise patterns, and make decisions in ways that resemble human cognition. While today’s AI systems are highly sophisticated, their conceptual roots trace back to remarkably simple models of computation inspired by the brain. One of the earliest such models is the neuron proposed by McCulloch and Pitts \cite{McCulloch1943}, which abstracts a biological neuron as a unit that produces a binary output depending on whether its inputs exceed a certain threshold.
Building on this idea, Frank Rosenblatt introduced the perceptron, a model that can learn from examples by adjusting the importance (or weights) assigned to its inputs \cite{Rosenblatt1958}. This marked a crucial step toward adaptive systems, capable not just of processing information but of improving their performance over time. Although simple, the perceptron established the fundamental principle of learning through weighted combinations of inputs, a principle that remains at the heart of modern neural networks.

Quantum perceptron models represent quantum analogues of the classical perceptron, 
aiming to exploit quantum parallelism and amplitude amplification for learning tasks. 
Earlier works proposed a perceptron model based on 
quantum phase estimation to implement a step activation function ~\cite{schuld2014}. 
It was then observed in ~\cite{wiebe2016} that quantum algorithms can achieve 
quadratic improvements in training speed compared to classical methods. 
Later, a unitary quantum perceptron 
featuring a sigmoid-like activation function was constructed in ~\cite{torrontegui2018}. 
Recent works~\cite{springer2022,roget2022} further generalise these approaches, allowing arbitrary 
activation functions and hybrid quantum-classical learning strategies. In all the above models, the usual approach is to simulate the non-linear activation function using some operation in quantum theory. A slightly different approach was considered in ~\cite{benatti2019, calderon2024} 
where the perceptron is implemented by performing local binary outcome measurements on a cluster state, an idea inspired by measurement-based quantum computation \cite{raussendorf2001}. Some other quantum perceptron models and algorithms for learning were proposed in \cite{lloyd2013, otterbach2017unsupervisedmachinelearninghybrid,wiebe2015quantumdeeplearning, wiebe2015, Wan_2017,Schuld_2015}. For a review on the progress of quantum machine learning, refer to \cite{QAIreview, QAIreview2}.

Most models like \cite{wiebe2016} show a quadratic improvement in sample complexity or training steps, but only for specific learning settings, for example, linearly separable data. Moreover, no universal proof exists showing that a quantum neural network learns any function faster than a classical one. Here, we take another approach where, instead of showing the supremacy of a quantum perceptron in learning, we show that a quantum perceptron's supremacy can be established for its predictive power. Consequently, we show that when classical and quantum perceptrons learn the same, the prediction capability of a quantum perceptron can not be reached by any classical counterpart.

Our result establishes the first provable separation in prediction accuracy between classical and quantum perceptrons, even when both implement the same hypothesis and undergo the same training procedure. By working with Rosenblatt’s original perceptron function, we demonstrate that the quantum perceptron is more predictive under identical information-theoretic restrictions. Considering binary inputs, we demonstrate the universality of the observed quantum advantage, that is, for any function implementable in the classical perceptron, one can obtain better predictions if instead one utilised a quantum perceptron given the same restriction on information and same learning. Moreover, the quantum advantage is model uniform, which specifies that for any Boolean function, the same states and measurements can be used to implement the quantum perceptron. This elevates the role of quantum machine learning from toy demonstrations of expressivity to rigorous evidence of predictive superiority, thereby linking quantum information theory, via accessible information and optimal state discrimination, with statistical learning theory, via generalisation error, which evaluates the performance of a perceptron on new, unseen data.

\textit{Classical perceptron---}
We begin by describing the simplest classical perceptron, the basic unit of an artificial neural network [see Fig. \ref{fig1}], also referred to as the McCulloch-Pitts neuron \cite{McCulloch1943}. It is composed of two branches that input the values $x_1,x_2=0,1$ respectively, based on which it computes an activation function $f(.)$ as $ z = f\left( w_1 x_1+w_2x_2 + b \right)$,
where $w_i$ are corresponding weights, and $b$ is the bias term with $z$ taking two values $0,1$. Generally, the function $f(.)$ is non-linear, for instance, in the original Rosenblatt perceptron \cite{Rosenblatt1958}, the activation function is a step function which returns the value $1$ if $w_1 x_1+w_2x_2 + b\geq0$ and $0$ otherwise. In the learning phase, the perceptron is provided with a set of values $\{x_1, x_2, z\}$, also called the learning set, based on which it must assign the values $\{w_1, w_2, b\}$. In the prediction phase, only the values $\{x_1,x_2\}$ is provided, based on which it outputs $z$ depending on the learned $\{w_1, w_2, b\}$. In this work, we restrict to discrete values of $x_1,x_2$ such that each input can take two values $x_i=0,1$. Let us now model the Informationally-restricted Measurement-based Perceptron, or simply, IMP.

\begin{figure*}[t!]
    \centering
    \includegraphics[width=\linewidth]{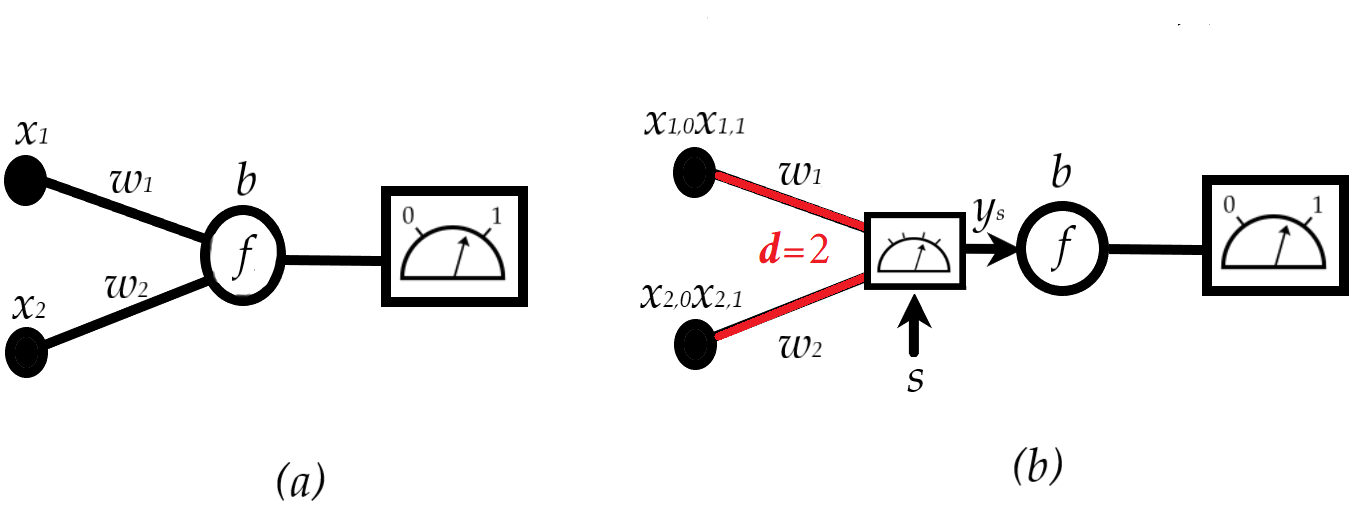}
    \caption{Perceptrons. (a) McCulloch-Pitts neuron with Rosenblatt's weights and bias. Each input $x_i$ is multiplied by a weight $w_i$, and all weighted inputs are summed along with a bias $b$. The neuron then compares this sum with a fixed threshold depending on the function $f$; if the sum is greater than or equal to the threshold, the neuron outputs $1$, otherwise it outputs $0$. (b) Informationally-restricted quantum perceptron. Similar to the McCulloch-Pitts neuron, but each input encodes two bits $x_{i,0}x_{i,1}$ for $i=1,2$. However, the channel connecting the inputs to the node can transmit only a bit (qubit) of information; that is, the maximal dimension of the transmitted system that encodes the inputs is two. At the node, the input $s=0,1$ decides the bit to evaluate. It is important here that $s$ is chosen after the transmitted system reaches the node.}
    \label{fig1}
\end{figure*}

\textit{Informationally-restricted Measurement-based perceptron (IMP)---}
We consider the McCulloch-Pitts neuron and modify it such that each input now sends two bits as $x_i=x_{i,0}x_{i,1}$ such that $x_{i,s}=0,1$ with $s=0,1$. At the node [see Fig. \ref{fig1}], the perceptron is provided with the value $s$, based on which it 
returns the outputs $y_{1,s}y_{2,s}$ where $y_{1,s},y_{2,s}=0,1$. Using the outputs, and depending on each $s$, the perceptron now has to evaluate either of two different functions $f_s$ such that
\begin{equation} z_s=f_s(w_{1,s}y_{1,s}+w_{2,s}y_{2,s}+b_s)
\end{equation}
with their corresponding outputs denoted as $z_s=0,1$. Notice that $z_s$ is the final output of the perceptron while $y_{1,s},y_{2,s}$ are the measurement outputs at the node. Here, we consider that the function along with weights and biases of the perceptron are same for both choices $s$, and thus, $w_{i,0}=w_{i,1}=w_i\ (i=1,2)$ and $b_0=b_1=b$ with $f_s\equiv f$. Without the restriction on the dimension of the transmitted state, this is exactly a standard perceptron with each input $x_i$ sending two bits. However, at the node, based on the choice $s$, the perceptron decides the bit to evaluate. Let us remark here, that the restriction on dimension is important, as any quantum advantage over classical can only be demonstrated under some restriction of resources. Moreover, the choice $s$, can not be revealed before the input bits reach the node. 

Suppose now that the encoded bits are $x'_{1,s},x'_{2,s}$ which gives $f(x'_{1,s},x'_{2,s})=z_s^{\mathrm{cor}}$, thus, $z_s^{\mathrm{cor}}$ is the correct output of the perceptron when the inputs are $x'_{1,s},x'_{2,s}$. If the perceptron predicts perfectly, then $z_s^{\mathrm{pred}}=z_s^{\mathrm{cor}}$, where $z_s^{\mathrm{pred}}$ is the actual output of the perceptron. Moreover, if the measurement outcome at the node is the same as the inputs $y_{i,s}=x_{i,s}'$, then also one would obtain $z_s^{\mathrm{pred}}=z_s^{\mathrm{cor}}$. 
Consequently, for any function $f(.)$, we can define a set of values of $y_{1,s},y_{2,s}$ that return the value $z_s$, that is, $\mathcal{K}_{z_s^{\mathrm{cor}}}=\{y_{1,s},y_{2,s}|f(y_{1,s},y_{2,s})=z_s^{\mathrm{cor}}\}$ for some $z_s^{\mathrm{cor}}=f(x_{1,s}'x_{2,s}')$. Consequently, when the output of the measurement at the node $y_{1,s},y_{2,s}$ belongs to $\mathcal{K}_{z_s^{\mathrm{cor}}}$, the perceptron will output $z_s^{\mathrm{cor}}$. Let us remark that fixing the inputs $x_{1,0},x_{2,1},x_{2,0},x_{2,1}$ gives only a fixed value of $z_s^{\mathrm{cor}}$. This allows us to evaluate the probability that the perceptron correctly outputs $z_s^{\mathrm{cor}}$ as
\begin{eqnarray}\label{psuc}
    p_{\mathrm{cor}}=\frac{1}{\mathcal{N}}\sum_{s}\sum_{x_{1,0},x_{1,1}}\sum_{x_{2,0},x_{2,1}}\sum_{y_{1,s},y_{2,s}\in\mathcal{K}_{z_s^{\mathrm{cor}}}}\qquad\qquad\nonumber\\p(y_{1,s},y_{2,s}|s,x_{1,0},x_{1,1},x_{2,0},x_{2,1})  
\end{eqnarray}
where $\mathcal{N}$ is the number of terms appearing in the summation, which depends on $\mathcal{K}_{z_s^{\mathrm{cor}}}$. Here, $p(y_{1,s},y_{2,s}|s,x_{1,0},x_{2,1},x_{2,0},x_{2,1})$ is the probability of obtaining $y_{1,s},y_{2,s}$ given the inputs $x_{1,0},x_{2,1},x_{2,0},x_{2,1}$ and the query at the node is $s$. 
Recall that $z_s^{\mathrm{cor}}$ is fixed by the inputs $x_{1,0},x_{2,1},x_{2,0},x_{2,1}$. For ease of notation, from here on we denote $\mathbf{x}\equiv x_{1,0},x_{2,1},x_{2,0}$ and $\mathbf{y_s}\equiv y_{1,s},y_{2,s}$. It is clear from the above formula that if the perceptron always outputs $y_{1,s},y_{2,s}\in z_s^{\mathrm{cor}}$, then $p_{\mathrm{cor}}=1$. The maximal value of $p_{\mathrm{cor}}$ \eqref{psuc} attainable using classical strategies will be denoted as $\beta_C$. Since the expression \eqref{psuc} estimates the probability of the perceptron to predict correct outcomes, we call it prediction witness. In short, the basic intuition behind the prediction witness is that, to evaluate the function $f$ correctly, the perceptron needs first to infer the bits sent by the inputs. If the inferred bits correspond to inputs that yield the same output under 
$f$, the prediction is deemed correct. Conversely, the prediction is incorrect if the perceptron identifies a set of bits whose evaluation under $f$ differs from that of the original inputs.

\textit{Quantum-IMP---} Let us now describe the scenario when IMP operates on quantum states and measurements, which will be referred to as Quantum-IMP or Q-IMP. At the inputs, each $x_{1,0},x_{1,1}$ and $x_{2,0},x_{2,1}$ are encoded via quantum states $\rho_{x_{1,0}x_{1,1}}$ and $\rho_{x_{2,0}x_{2,1}}$ respectively and sent to the node. 
Based on the query $s$, the incoming quantum states are measured using some quantum measurement $\{\mathcal{M}^s_{\mathbf{y_s}}\}$ where $\mathcal{M}^s_{\mathbf{y_s}}$ is positive and sums upto identity. The probability $p(\mathbf{y_s}|s,\mathbf{x})$ can be evalutated using the Born rule as
\begin{eqnarray}
p(\mathbf{y_s}|s,\mathbf{x})=\Tr\left(\mathcal{M}^s_{\mathbf{y_{s}}}\rho_{x_{1,0}x_{1,1}}\otimes\rho_{x_{2,0}x_{2,1}}\right)
\end{eqnarray}
The output values are then used to obtain $z_s$ by evaluating $f(.)$. Notice that if one does not restrict the information sent from the inputs, then one can just send the classical labels, which will always lead to correct prediction by the perceptron. The maximal value of $p_{\mathrm{cor}}$ \eqref{psuc} attainable using quantum strategies will be denoted as $\beta_Q$. Let us now proceed to the main results of this work. 

\textit{Quantum advantage of Q-IMP: Single-variable functions---}
Let us begin by noting that for any function $f(.)$, the classical values of $p_{\mathrm{cor}}$ are always obtainable using quantum strategies. Consequently, to observe the supremacy of Q-IMP over IMP, we need to identify functions $f(.)$ such that $\beta_C<\beta_Q$. It is important to note here that to observe a genuine supremacy, we need to consider functions that can be implemented using the classical perceptron. Consequently, functions like XOR can not be considered as they are known not to be implementable using classical perceptrons \cite{Tacchino2019}. This step allows us to put IMP and Q-IMP on the same footing, as they can learn the same values of $w_1,w_2,b$ given the initial training values.  

For our purpose, we consider the step function considered in the Rosenblatt perceptron \cite{Rosenblatt1958} as 
\begin{eqnarray}\label{rosenblatt}
    f(y_{1,s},y_{2,s})=\begin{cases}
        1 & w_1y_{1,s}+w_2y_{2,s}+b\geq0 \\
        0 & w_1y_{1,s}+w_2y_{2,s}+b<0
    \end{cases}
\end{eqnarray}
Let us now state the simplest instance where the supremacy of the Q-IMP can be established.
Suppose after the perceptron is trained, the values of $w_1,w_2,b$ are fixed to be $w_1=1$ and $w_2=b=-1/2$ for $s=0,1$. Plugging these values into the above function \eqref{rosenblatt}, we obtain that when $y_{1,s}=z_s^{\mathrm{pred}}$ for all $s$, that is, the output of the perceptron is only dependent on the first input $y_{1,s}$, and thus, are called single-variable functions. Moreover, we also obtain that $x_{1,s}=z_s^{\mathrm{cor}}$. Thus, we can conclude that in this example $\mathcal{K}_{z_s^{\mathrm{cor}}}$ is given by  $\mathcal{K}_{z_s^{\mathrm{cor}}}=\{y_{1,s},y_{2,s}|y_{1,s}=z_s^{\mathrm{cor}}=x_{1,s}\}$. Consequently, the probability of the perceptron to correctly predict $z_s^{\mathrm{cor}}$ given from \eqref{psuc} can be simplified to
\begin{eqnarray}\label{QRAC}
    p_{\mathrm{cor}}=\frac{1}{8}\sum_{s}\sum_{x_{1,0},x_{1,1}}\sum_{y_{1,s}\atop y_{1,s}=x_{1,s}}p(y_{1,s}|s,x_{1,0},x_{1,1}). 
\end{eqnarray}

This is in fact, the expression of the success probability in the quantum random access codes or Q-RAC in its simplest scenario \cite{Ambainis2002QRAC, Nayak1999QRAC}. It was already shown in \cite{Ambainis2002QRAC} that the maximal value of the above functional \eqref{QRAC} for classical strategies is  $\beta_C=3/4$. In all such scenarios, the best classical strategy usually corresponds to sending either of the bits in each round. Thus, for one value of the choice $s$, the perceptron predicts perfectly, while for the other $s$, the perceptron guesses randomly. Consider now the following quantum states corresponding to inputs $x_{i,s}$ as
\begin{eqnarray}\label{qstates}
\ket{\psi_{x_{i,0},x_{i,1}}}=\sigma_x^{x_{i,0}} \sigma_z^{x_{i,1}}
\left(
\cos\frac{\pi}{8}\,\ket{0}
+
\sin\frac{\pi}{8}\,\ket{1}
\right)
\end{eqnarray}
where $\sigma_{x/z}$ represents the Pauli $X/Z$ operators respectively.
Using the measurements $(j=0,1)$
\begin{eqnarray}\label{mea1}
\mathcal{M}_{i,j}^0&=&\proj{i}\otimes\I_2,\qquad i=0,1\nonumber\\    \mathcal{M}_{0,j}^1&=&\proj{+}\otimes\I_2,\quad \mathcal{M}_{1,j}^1=\proj{-}\otimes\I_2
\end{eqnarray}
where $\ket{\pm}=1/\sqrt{2}(\ket{0}\pm\ket{1})$ and $\I_2$ is the identity matrix of dimension $2$, one can obtain $\beta_Q=1/2(1+1/\sqrt{2})$. Consequently, while the classical IMP can predict with at most 75 \% certainty, the Q-IMP can predict better with an accuracy of close to 85 \%, given the same informational restriction and the same learning.



Let us take another example where the second set of inputs also contribute towards the outcome of the perceptron.


\textit{Quantum advantage of Q-IMP: And-type functions---}
We again consider that the perceptron operates via Rosenblatt's function \eqref{rosenblatt}. Suppose after the perceptron is trained, the values of $w_1,w_2,b$ are fixed to be $w_1=w_2=1$ and $b=-3/2$ for $s=0,1$. Plugging these values into the above function \eqref{rosenblatt}, we obtain that the perceptron implements the And-function, that is, outputs $1$ only when both $y_{1,s}=y_{2,s}=1$ or else $0$ for any $s$. Consequently we have that, $z^{\mathrm{cor}}_s=x_{1,s}.x_{2,s}$ which allows us to conclude that 
$\mathcal{K}_{z_s^{corr}}=\{y_{1,s},y_{2,s}|y_{1,s}.y_{2,s}=z^{\mathrm{cor}}_s=x_{1,s}.x_{2,s}\}$.

This allows us to express the probability of correcting predicting by the perceptron $p_{\mathrm{cor}}$ from \eqref{psuc} as 
\begin{eqnarray}\label{psucand}
    p_{\mathrm{cor}}=\frac{1}{80}\sum_s \sum_{\mathbf{x}}\sum_{\mathbf{y_s}\atop y_{1,s}.y_{2,s}=x_{1,s}.x_{2,s}} p(\mathbf{y_{s}}|s,\mathbf{x})
\end{eqnarray}
The factor $80$ in the denominator of the above functional arises due to the number of individual probability terms in the above functional. 
Let us now evaluate the classical bound $\beta_C$ of the functional \eqref{psucand}.

\begin{lem}
    The classical bound of \eqref{psucand} is $\beta_C=\frac{13}{40}$.
\end{lem}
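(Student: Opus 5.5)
The plan is to reduce the classical optimization to a finite combinatorial search over deterministic strategies, and then to bound that search by a per-cell counting argument. First, by convexity of \eqref{psucand} in the behaviour $p(\mathbf{y_s}|s,\mathbf{x})$, the classical maximum is attained on an extremal strategy. A general classical strategy with one transmitted bit per input is a mixture, possibly over shared randomness, of deterministic ones. In a deterministic strategy each input applies an encoding $E_i:\{0,1\}^2\to\{0,1\}$ with $m_i=E_i(x_{i,0},x_{i,1})$, and the node applies a decoding $(m_1,m_2,s)\mapsto \mathbf{y_s}$. Since only the product $y_{1,s}.y_{2,s}$ enters the condition $\mathbf{y_s}\in\mathcal{K}_{z_s^{\mathrm{cor}}}$, the decoder reduces to a Boolean guess $g_s(m_1,m_2)\in\{0,1\}$ of the And-value. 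For a deterministic strategy, each pair $(s,\mathbf{x})$ then contributes $1$ to the sum in \eqref{psucand} if $g_s(E_1(x_{1,0}x_{1,1}),E_2(x_{2,0}x_{2,1}))=x_{1,s}.x_{2,s}$ and $0$ otherwise. So $80\,p_{\mathrm{cor}}$ equals the number of correctly guessed pairs out of the $32$ pairs $(s,\mathbf{x})$.

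Second, I would fix the encoders and optimize the decoder cell by cell. For each $s$ and each cell $(m_1,m_2)$, the optimal $g_s(m_1,m_2)$ is the majority value of $x_{1,s}.x_{2,s}$ among the $\mathbf{x}$ mapped to that cell. The count for a given $s$ is therefore $\sum_{m_1,m_2}\max\{N^{s}_{0}(m_1,m_2),N^{s}_{1}(m_1,m_2)\}$, where $N^{s}_{z}$ counts the inputs in the cell with target value $z$. Because the $\mathbf{x}$ factorize over the two inputs, $N^{s}_{1}(m_1,m_2)=a^s_1(m_1)a^s_2(m_2)$, where $a^s_i(m)$ is the number of pairs $(x_{i,0},x_{i,1})$ with $E_i=m$ and $x_{i,s}=1$. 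Together with the cell sizes $|E_1^{-1}(m_1)||E_2^{-1}(m_2)|$, this turns the problem into maximizing a small explicit expression over the $16\times 16$ encoder pairs.

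Third, I would cut this down using symmetries: relabeling $m_i$, swapping the two inputs, and swapping $s=0\leftrightarrow 1$ together with $x_{i,0}\leftrightarrow x_{i,1}$. Constant encoders are handled separately. The remaining classes are essentially "send $x_{i,0}$", "send $x_{i,1}$", "send And/Or of the two bits", and "send XOR". Evaluating the cell formula on each representative gives the maximal number of correct pairs, which I then divide by $80$. Finally, I exhibit the explicit encoder/decoder achieving the maximum to show tightness, so that $\beta_C=\frac{13}{40}$ means exactly $26$ correct pairs.

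The main obstacle I expect is the bookkeeping of the normalization in \eqref{psucand}, namely the correspondence between the $80$ probability terms and the $32$ pairs $(s,\mathbf{x})$. A quick check of the natural strategy "send $x_{i,0}$, and for $s=1$ always guess $0$" seems to give $16+12$ correct pairs. I would therefore first pin down precisely how the terms are weighted, in particular whether each $z=0$ pair contributes the sum of three probabilities or is counted otherwise. I would redo the per-cell majority count under that convention before trusting the exhaustive search to land on $13/40$. If needed, I would confirm the final value by a direct brute-force enumeration of all $16^2$ encoder pairs with optimal decoders.
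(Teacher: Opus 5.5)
\textbf{The lemma is false, and your own sanity check is the counterexample.} Your reduction is sound: convexity to deterministic strategies, per-cell majority decoding, $N^s_1=a^s_1a^s_2$, then finite enumeration. It is also more than the paper does. The paper's proof evaluates only one strategy: send $x_{i,0}$, and at $s=1$ output $\mathbf{y_1}$ uniformly at random. That gives $\frac{1}{80}(16+10)$, and the paper never argues that no classical strategy does better.

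There is no alternative weighting to resolve. For each pair $(s,\mathbf{x})$, the terms of \eqref{psucand} sum to exactly the probability that $y_{1,s}.y_{2,s}=x_{1,s}.x_{2,s}$. The factor $80=2(4\cdot1+12\cdot3)$ is just the number of terms: one admissible $\mathbf{y_s}$ when the And-value is $1$, three when it is $0$. So $80\,p_{\mathrm{cor}}$ is the number of correctly handled pairs. Sending $m_i=x_{i,0}$ with decoder $\mathbf{y_0}=(m_1,m_2)$ and $\mathbf{y_1}=(0,0)$ scores $16+12=28$, i.e.\ $p_{\mathrm{cor}}=\frac{7}{20}>\frac{13}{40}$.

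The paper's strategy is suboptimal because at $s=1$ the target And-value is $0$ for $12$ of the $16$ inputs. A deterministic guess of $0$ therefore beats uniform guessing. In the single-variable case the target bit is balanced, so uniform guessing is optimal there, but this does not carry over. Consequently your final step, exhibiting an optimal strategy ``so that $\beta_C=\frac{13}{40}$ means exactly $26$ correct pairs'', cannot succeed.

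Carrying out your enumeration gives the correct value. For fixed $s$, record each encoder preimage as (size, number of elements with $x_{i,s}=1$). Every encoder is then of one of four kinds:
\begin{itemize}
\item $\{(2,0),(2,2)\}$, which reveals $x_{i,s}$;
\item $\{(4,2)\}$ or $\{(2,1),(2,1)\}$, which is uninformative;
\item $\{(1,0),(3,2)\}$, a singleton split;
\item $\{(1,1),(3,1)\}$, a singleton split.
\end{itemize}
Your majority formula then gives the following per-$s$ counts:
\begin{itemize}
\item $16$ iff both encoders reveal $x_{i,s}$;
\item $14$ if one reveals and the other is a singleton split;
\item $13$ for two singleton splits not both of type $\{(1,0),(3,2)\}$;
\item $12$ otherwise, in particular whenever one encoder is uninformative.
\end{itemize}
An encoder revealing $x_{i,0}$ is uninformative for $s=1$, and vice versa. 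So a count of $16$ or $14$ at one query forces $12$ at the other; otherwise both queries give at most $13$. Hence the maximum is $28$ and $\beta_C=\frac{7}{20}$.

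Your list of classes also omits the splits $\{10\}|\{00,01,11\}$ and $\{01\}|\{00,10,11\}$. Your symmetries do not map these to And/Or, but they give at most $26$, so the conclusion is unaffected.

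This is not a cosmetic correction. The value $\frac{7}{20}$ exceeds $\frac{1}{40}(11+2\sqrt2)\approx0.346$ from the next lemma. So the claimed separation for And-type functions does not hold with those states and measurements.
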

\begin{proof}
    Considering the simplest classical, one sends the first bit corresponding to $s=0$, and thus we get 
    \begin{eqnarray}
p(\mathbf{y_{0}}|s=0,\mathbf{x})=1\quad \mathrm{for}\ y_{i,0}=x_{i,0},
    \end{eqnarray}
    and $0$ otherwise.
    For $s=1$, the perceptron outputs uniformly accross any input and thus we have
    \begin{eqnarray}
p(\mathbf{y_{1}}|s=1,\mathbf{x})=\frac{1}{4}.
    \end{eqnarray}
    This gives us $\beta_C=\frac{1}{80}(16+10)=\frac{13}{40}.$
\end{proof}
Let us now show that with the quantum states defined in \eqref{qstates} and measurements 
\begin{eqnarray}\label{meaand}
    M_{i,j}^0&=&\proj{i}\otimes\proj{j},\qquad i,j=0,1\nonumber\\
    M_{i,j}^1&=&\proj{+_i}\otimes\proj{+_j},
\end{eqnarray}
where $\ket{+_j}=1/\sqrt{2}(\ket{0}+(-1)^j\ket{1})$ one can obtain a $p_{\mathrm{cor}}$ \eqref{psucand} higher than $\beta_C$.

\begin{lem}
Consider the states \eqref{qstates} and measurements \eqref{meaand}. Using them, we obtain that $\beta_Q=\frac{1}{40}\left(11+2\sqrt{2}\right)$ of \eqref{psucand}.
\end{lem}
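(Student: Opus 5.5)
The plan is a direct computation: the states \eqref{qstates} and measurements \eqref{meaand} are both in product form, so every probability in \eqref{psucand} factorises into single-qubit probabilities. Concretely,
$p(\mathbf{y_s}|s,\mathbf{x})=\Tr\left(\Pi^s_{y_{1,s}}\rho_{x_{1,0}x_{1,1}}\right)\Tr\left(\Pi^s_{y_{2,s}}\rho_{x_{2,0}x_{2,1}}\right)$,
where $\Pi^0_j=\proj{j}$ and $\Pi^1_j=\proj{+_j}$. I will show that each single-qubit factor depends only on whether $y_{i,s}=x_{i,s}$ or not.

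Write $c=\cos^2\frac{\pi}{8}=\frac12\left(1+\frac{1}{\sqrt2}\right)$ and $w=\sin^2\frac{\pi}{8}=1-c$. The claim is $\Tr(\Pi^s_{y}\rho_{x_{i,0}x_{i,1}})=c$ if $y=x_{i,s}$ and $w$ otherwise.

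For $s=0$ this is immediate. The factor $\sigma_z^{x_{i,1}}$ only changes the sign of the $\ket{1}$ amplitude, and $\sigma_x^{x_{i,0}}$ swaps $\ket{0}$ and $\ket{1}$. Hence the weight on $\ket{x_{i,0}}$ is $\cos^2\frac{\pi}{8}$.

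For $s=1$, I compute the overlap $|\langle +|\cdot\rangle|^2=\frac12(\cos\frac{\pi}{8}\pm\sin\frac{\pi}{8})^2=\frac12(1\pm\sin\frac{\pi}{4})$. The sign is $+$ exactly when the $\ket{1}$ amplitude is not flipped, i.e. when $x_{i,1}=0$. The $\sigma_x$ factor leaves $\ket{\pm}$ invariant up to a phase, so $x_{i,0}$ plays no role. Checking these four cases is the only place the explicit form of the states enters. It is the main, though routine, step: it amounts to verifying the standard $2\to1$ QRAC optimality of these states, which was already used for \eqref{QRAC}.

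Next I organise the sum in \eqref{psucand}. Fix $s$. The summand depends on $\mathbf{x}$ only through the pair $(x_{1,s},x_{2,s})$, and each of the four values of this pair occurs for $4$ choices of the two remaining bits. For each pair I use the factorisation and the structure of $\mathcal{K}_{z_s^{\mathrm{cor}}}$:
\begin{itemize}
\item If $(x_{1,s},x_{2,s})=(1,1)$, then $\mathcal{K}=\{(1,1)\}$ and the contribution is $c^2$.
\item Otherwise $\mathcal{K}$ is the complement of $\{(1,1)\}$, so the contribution is $1-p(1,1)$. This equals $1-w^2$ for the pair $(0,0)$, and $1-cw$ for each of $(0,1)$ and $(1,0)$.
\end{itemize}
As a consistency check, the number of terms per $s$ is $4(1+3+3+3)=40$, matching $\mathcal{N}=80$.

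The sum for fixed $s$ is therefore $4\left[c^2-w^2+3-2cw\right]$. Using $c+w=1$ gives $c^2-w^2=c-w=\frac{1}{\sqrt2}$, and $cw=\frac14\sin^2\frac{\pi}{4}=\frac18$. Hence the per-$s$ sum is $4\left(\frac{11}{4}+\frac{1}{\sqrt2}\right)=11+2\sqrt2$. This is identical for $s=0$ and $s=1$, so
$p_{\mathrm{cor}}=\frac{2(11+2\sqrt2)}{80}=\frac{1}{40}\left(11+2\sqrt2\right)$,
which is the value stated in the lemma and exceeds $\beta_C=\frac{13}{40}$.

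I read the lemma as the value attained by this particular quantum strategy, which is a lower bound on the quantum optimum. I would not attempt to prove that no other qubit strategy does better.
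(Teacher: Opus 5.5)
Your proof is correct and follows the paper's own argument. The paper simply lists the same product distribution ($c^2$, $cw=\tfrac18$, or $w^2$ according to how many of $y_{1,s},y_{2,s}$ match $x_{1,s},x_{2,s}$, identically for $s=0,1$) and sums it, and your count of $4(1+3+3+3)=40$ terms per $s$ supplies the bookkeeping the paper leaves implicit.

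Your choice to read the lemma as the value of this particular strategy, rather than the quantum optimum, is exactly what the paper's proof establishes. It is also the only tenable reading: sending $x_{i,0}$ and answering $(0,0)$ whenever $s=1$ already gives $\frac{1}{80}(16+12)=\frac{7}{20}>\frac{1}{40}(11+2\sqrt2)$. That same strategy means your closing comparison with $\beta_C=\frac{13}{40}$ inherits an error from the paper's classical-bound lemma.
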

\begin{proof}
    Let us construct the probability distribution using the quantum states \eqref{qstates} and measurements \eqref{meaand} as
    \begin{eqnarray}
       p(\mathbf{y_0}|0,\mathbf{x})=\begin{cases}
            \frac{1}{4}(1+\frac{1}{\sqrt{2}})^2\quad y_{1,0}=x_{1,0},y_{2,0}=x_{2,0}\nonumber\\
             \frac{1}{8}\qquad\qquad\quad\  y_{1,0}=x_{1,0},y_{2,0}\ne x_{2,0}\nonumber\\
             \frac{1}{8}\qquad\qquad\quad\  y_{1,0}\ne x_{1,0},y_{2,0}= x_{2,0}\nonumber\\
             \frac{1}{4}(1-\frac{1}{\sqrt{2}})^2\quad y_{1,0}\ne x_{1,0},y_{2,0}\ne x_{2,0}
        \end{cases}.
    \end{eqnarray}
    One can obtain the exact same distribution for $s=1$. Thus, we obtain the value of \eqref{psucand} to be $\beta_Q=\frac{1}{40}\left(11+2\sqrt{2}\right)$. This completes the proof.
   \end{proof}
        

Let us now establish the universality of the quantum advantage.

\textit{Universal quantum advantage for discrete binary inputs---} Here, we demonstrate that for every non-trivial function implementable with both the classical and quantum perceptron when the inputs are binary, the quantum one predicts better. For this purpose, let us first recall all the functions implementable by a Rosenblatt perceptron when the inputs are binary. One can directly infer that, in this setting, the perceptron realises a Boolean mapping of the form $\{0,1\}^2 \to \{0,1\}$. It is a standard result that a Boolean function can be represented by a perceptron if and only if it is linearly separable \cite{minsky1969perceptrons}. Among the $16$ possible Boolean functions of two binary variables, exactly two, namely XOR and XNOR, fail to satisfy linear separability. Therefore, the perceptron is capable of implementing exactly $14$ such functions. The truth tables, along with some possible values of the parameters $w_1,w_2,b$ are given in Appendix A. They are grouped into four classes: (a) Constant (b) Single-valued (c) And/Or, (d) Assymetric. As IMP is based on the same setting as the Rosenblatt perceptron, any function corresponding to choice $s$, implements either of these gates. 

In the above examples, we demonstrated the predictive supremacy of Q-IMP over IMP when (i) the function $f$ is single-valued, (ii) the function is of the And type. Now, it is straightforward to observe that whenever $f$ is constant, then it is trivial, as any perceptron can always predict the outcome with certainty. Consequently, we consider all the possible implementations when $f$ is non-trivial. In Appendix B, we state all the possible non-trivial functions $f$ and find the maximal value attainable in a classical-IMP versus a quantum-IMP. In all such non-trivial implementations, we find that quantum-IMP predicts better than classical IMP. For this purpose, we use the fact that any other function is basically a rearrangement of single or And type functions. 

Since any classical perceptron with binary inputs can only generate these functions, we establish that, when a quantum and classical perceptron implement the same functions and learn the same, the quantum perceptron performs strictly better for any weights, biases or functions. Moreover, it is universal and model uniform, as with the same set of states, (up to ignoring one of them when the perceptron implements single-variable functions) and measurements, one can obtain an advantage in the predictive power of the quantum perceptron over a classical one. One might be tempted to conclude that the advantage is small; however, we are only observing this for one perceptron and with each new layer, the gap is expected to increase.

\textit{Discussions---}
Some functions, such as, XOR can be realised by a single-layer quantum perceptron, whereas a single-layer classical perceptron, restricted to linear threshold operations, cannot \cite{Tacchino2019}. This separation, however, does not reflect a unique power of the perceptron architecture itself, but rather the broader fact that quantum mechanics permits a richer set of operations than classical physics. To claim a genuine perceptron supremacy, both classical and quantum models must be placed on the same footing: identical function, input–output specifications, and layer depth. Under these conditions, a meaningful separation must arise not merely from representability, but from demonstrable differences in learning efficiency or predictive accuracy. We observed above that even when classical and quantum perceptrons can both implement the same function along with the same learning, the quantum perceptron achieves strictly better prediction accuracy under the same informational restriction. This is the only evidence, as of now, of a predictive advantage of a quantum perceptron over any classical one. One can thus conclude that quantum learners extract more predictive information per bit of data access than classical ones. Moreover, this result establishes a rigorous information-theoretic separation in prediction, rather than relying on conventional complexity-theoretic arguments, where apparent advantages often diminish as improved classical algorithms are developed. The separation here is fundamental, which can not be reduced with better classical strategies. Conceptually, this finding reframes the debate: the quantum advantage is not confined to contrived functions or asymptotic speedups, but arises already in the simplest, historically central model of learning.

Several interesting follow-up problems arise from this work. The most interesting among them would be to observe whether the informationally restricted perceptron learns better when using quantum realisations than classical ones. Another important problem in this direction will be to extend the results from discrete binary inputs to the continuous case, which can be modelled by considering an arbitrary number of discrete variables at the inputs. Furthermore, reducing energy consumption has become a central challenge in modern commercial artificial intelligence, as the growing scale and complexity of these systems demand increasingly significant computational resources. It will be intriguing if the restriction on information can be replaced with a restriction on energy, and then one could demonstrate that with the same amount of energy, a quantum perceptron performs better. Here, we demonstrated the quantum advantage in prediction by a single perceptron. It will be highly appealing to analyse this setup considering neural networks, and observe how the gap scales with larger networks.

\textit{Acknowledgements---}
    This project was funded within the National Science Centre, Poland, grant Opus 25, UMO-2023/49/B/ST2/02468.


\bibliography{ref.bib}

\newpage 
\onecolumngrid
\appendix
    \section{ All Implementable functions of a two-bit classical perceptron}

Here, we provide a brief overview of all possible functions that can be implemented using a two-bit classical perceptron. The inputs of any two-bit classical perceptron are given by $(x_1,x_2)$ where $x_1,x_2=0,1$ with output given as $y=f(w_1x_1+w_2x_2+b)$ where $y=1$ if $w_1x_1+w_2x_2+b\ge 0$ and $0$ otherwise. Consequently, one can straightaway conclude that the perceptron in this scenario implements a Boolean function: $\{0,1\}^2\rightarrow\{0,1\}$.
It is well-known that a Boolean function is implementable by a perceptron if and only if it is linearly separable \cite{minsky1969perceptrons}. Since there are $16$ Boolean functions
on two bits and exactly two of them (XOR and XNOR) are not linearly separable,
the perceptron implements precisely $14$ functions.

The truth tables along with some possible values of the parameters $w_1,w_2,b$ are given below.

\begin{table}[h]
\centering
\renewcommand{\arraystretch}{1.2}
\begin{tabular}{c c c c}
\hline
\textbf{Type} & \textbf{Function} & \textbf{Truth Table} $(00,01,10,11)$ & \textbf{Example perceptron parameters $(w_1,w_2,b)$} \\
\hline

Constant  & $0$ & $(0,0,0,0)$ & $(0,0,-1)$ \\

Constant  & $1$ & $(1,1,1,1)$ & $(0,0,1)$ \\

Single-variable  & $x_1$ & $(0,0,1,1)$ & $(1,0,-\tfrac12)$ \\

Single-variable & $\neg x_1$ & $(1,1,0,0)$ & $(-1,0,\tfrac12)$ \\

Single-variable  & $x_2$ & $(0,1,0,1)$ & $(0,1,-\tfrac12)$ \\

Single-variable  & $\neg x_2$ & $(1,0,1,0)$ & $(0,-1,\tfrac12)$ \\

AND  & $x_1 \land x_2$ & $(0,0,0,1)$ & $(1,1,-\tfrac32)$ \\

NAND & $\neg(x_1 \land x_2)$ & $(1,1,1,0)$ & $(-1,-1,\tfrac32)$ \\

OR  & $x_1 \lor x_2$ & $(0,1,1,1)$ & $(1,1,-\tfrac12)$ \\

NOR & $\neg(x_1 \lor x_2)$ & $(1,0,0,0)$ & $(-1,-1,\tfrac12)$ \\

Assymetric & $\neg x_1 \lor x_2$ & $(1,1,0,1)$ & $(-1,1,\tfrac12)$ \\

Assymetric& $\neg x_2 \lor x_1$ & $(1,0,1,1)$ & $(1,-1,\tfrac12)$ \\

Assymetric & $x_1 \land \neg x_2$ & $(0,0,1,0)$ & $(1,-1,-\tfrac12)$ \\

Assymetric & $x_2 \land \neg x_1$ & $(0,1,0,0)$ & $(-1,1,-\tfrac12)$ \\

\hline
\end{tabular}
\caption{\label{table1} All linearly separable Boolean functions on two bits (implementable by a single classical perceptron).}
\end{table}

\section{Witnesses of quantum advantage for every classically implementable perceptron function}

Recall that to witness the predictive advantage of the quantum-IMP versus the classical one, we need to choose the function $f$. Here, we consider all the non-trivial functions that can be implemented in a classical perceptron, from Table \ref{table1}.

\subsection{$f$: Single-variable type}

Let us begin by considering the case when $f$ is single-variable functions. In the main text, we already demonstrated the quantum advantage when $f$ is single-variable $x_1$. Let us now consider the case when $f$ is the single-variable function  $\neg x_1$ [c.f. Table \ref{table1}]. We again conclude that in this example $\mathcal{K}_{z_s^{\mathrm{cor}}}$ is given by  $\mathcal{K}_{z_s^{\mathrm{cor}}}=\{y_{1,s},y_{2,s}|\neg y_{1,s}=z_s^{\mathrm{cor}}=\neg x_{1,s}\}$. Consequently, the prediction witness in this case is given by
\begin{eqnarray}\label{psucapp1}
    p_{\mathrm{cor}}=\frac{1}{8}\sum_{s=0,1}\sum_{x_{1,s}=0,1}\sum_{y_{1,s}=0,1\atop y_{1,s}= x_{1,s}}p(y_{1,s}|s,x_{1,s},x_{1,s}).
\end{eqnarray}
Again, using the classical strategy described in the manuscript, that is, sending the bit corresponding to either of $s=0,1$ and then randomly guessing the other bit, we obtain $\beta_C=\frac{3}{4}$. Consider now the same quantum states as in the main text in Eq. \eqref{qstates} corresponding to inputs $x_{i,s}$ as
\begin{eqnarray}\label{qstatesapp}
\ket{\psi_{x_{i,0},x_{i,1}}}=\sigma_x^{x_{i,0}} \sigma_z^{x_{i,1}}
\left(
\cos\frac{\pi}{8}\,\ket{0}
+
\sin\frac{\pi}{8}\,\ket{1}
\right)
\end{eqnarray}
Using the same measurements as Eq. \eqref{mea1}, that is, for $(j=0,1)$
\begin{eqnarray}
\mathcal{M}_{i,j}^0&=&\proj{ i}\otimes\I_2,\qquad i=0,1\nonumber\\    \mathcal{M}_{0,j}^1&=&\proj{+}\otimes\I_2,\quad \mathcal{M}_{1,j}^1=\proj{-}\otimes\I_2,
\end{eqnarray}
we obtain the quantum value of $p_{\mathrm{cor}}$ \eqref{psucapp1} to be $\beta_Q=1/2(1+1/\sqrt{2})$. Similarly, we can obtain the expression for predictive capability of the perceptron when the other two single-variable functions are considered. We state them below as a table. We also state the quantum measurements for obtaining the quantum value $\beta_Q$, with the same state \eqref{qstatesapp}.

\begin{table}[h]
\centering
\renewcommand{\arraystretch}{1.5}
\begin{tabular}{|c|c|c|c|c|}
\hline
\textbf{Function-Type} & \textbf{Prediction witness} & $\beta_C$ & \textbf{Quantum measurements} & $\beta_Q$ \\
\hline

Single-variable $(x_1)$ 
& \(\displaystyle p_{\mathrm{cor}}=\frac{1}{8}\sum_{s}\sum_{x_{1,0},x_{1,1}} \sum_{y_{1,s}=x_{1,s}} p(y_{1,s}\,|\,s,x_{1,0},x_{1,1})\)
& $\frac{3}{4}$ 
& $\begin{aligned}
\mathcal{M}_{i,j}^0 &= \proj{ i}\otimes \I_2, \quad i=0,1 \\
\mathcal{M}_{0,j}^1 &= \proj{-}\otimes \I_2 \\
\mathcal{M}_{1,j}^1 &= \proj{+}\otimes \I_2
\end{aligned}$
& $\frac{1}{2}\left(1+\frac{1}{\sqrt{2}}\right)$ \\

\hline
Single-variable $(\neg x_1)$ 
& \(\displaystyle p_{\mathrm{cor}}=\frac{1}{8}\sum_{s}\sum_{x_{1,0},x_{1,1}} \sum_{y_{1,s}=x_{1,s}} p(y_{1,s}\,|\,s,x_{1,0},x_{1,1})\)
& $\frac{3}{4}$ 
& $\begin{aligned}
\mathcal{M}_{i,j}^0 &= \proj{i}\otimes \I_2, \quad i=0,1 \\
\mathcal{M}_{0,j}^1 &= \proj{+}\otimes \I_2 \\
\mathcal{M}_{1,j}^1 &= \proj{-}\otimes \I_2
\end{aligned}$
& $\frac{1}{2}\left(1+\frac{1}{\sqrt{2}}\right)$ \\
\hline
Single-variable $(x_2)$ 
& \(\displaystyle p_{\mathrm{cor}}=\frac{1}{8}\sum_{s}\sum_{x_{2,0},x_{2,1}} \sum_{y_{2,s}=x_{2,s}} p(y_{2,s}\,|\,s,x_{2,0},x_{2,1})\)
& $\frac{3}{4}$
& $\begin{aligned}
\mathcal{M}_{i,j}^0 &= \I_2\otimes \proj{ i}, \quad i=0,1 \\
\mathcal{M}_{0,j}^1 &=  \I_2\otimes\proj{-} \\
\mathcal{M}_{1,j}^1 &=  \I_2\otimes\proj{+}
\end{aligned}$
& $\frac{1}{2}\left(1+\frac{1}{\sqrt{2}}\right)$ \\
\hline
Single-variable $(\neg x_2)$ 
& \(\displaystyle p_{\mathrm{cor}}=\frac{1}{8}\sum_{s}\sum_{x_{2,0},x_{2,1}} \sum_{y_{2,s}= x_{2,s}} p(y_{2,s}\,|\,s,x_{2,0},x_{2,1})\)
& $\frac{3}{4}$ 
& $\begin{aligned}
\mathcal{M}_{i,j}^0 &= \I_2\otimes \proj{ i}, \quad i=0,1 \\
\mathcal{M}_{0,j}^1 &=  \I_2\otimes\proj{+} \\
\mathcal{M}_{1,j}^1 &=  \I_2\otimes\proj{-}
\end{aligned}$
& $\frac{1}{2}\left(1+\frac{1}{\sqrt{2}}\right)$ \\
\hline
\end{tabular}
\caption{\label{table2} All single-variable functions implementable in IMP, along with the predictive advantage in the corresponding Q-IMP.}
\end{table}

\subsection{$f$: And/OR Type}

Let us now consider the case when $f$ is AND/OR type. In the main text, we already demonstrated the quantum advantage when $f$ is AND type function. Let us now consider the case when $f$ is OR type function [c.f. Table \ref{table1}]. We can conclude that in this example $\mathcal{K}_{z_s^{\mathrm{cor}}}$ is given by  $\mathcal{K}_{z_s^{\mathrm{cor}}}=\{y_{1,s},y_{2,s}|y_{1,s}\oplus y_{2,s}=z_s^{\mathrm{cor}}= x_{1,s}\oplus x_{2,s}\}$. Here, $a\oplus b$ implies $a+b$ modulus $2$. Consequently, the prediction witness in this case is given by
\begin{eqnarray}\label{psucapp2}
 p_{\mathrm{cor}}=\frac{1}{80}\sum_s \sum_{\mathbf{x}}\sum_{\mathbf{y_s}\atop y_{1,s}\oplus y_{2,s}=x_{1,s}\oplus x_{2,s}} p(\mathbf{y_{s}}|s,\mathbf{x}).
\end{eqnarray}
Let us recall that $\mathbf{y_{s}}=y_{1,s}, y_{2,s}$ and $\mathbf{x}=x_{1,0},x_{2,0},x_{1,1},x_{2,1}$.
Again, using the classical strategy described in the manuscript, that is, sending the bit corresponding to either of $s=0,1$ and then randomly guessing the other bit, we obtain $\beta_C=\frac{13}{80}$. Consider now the same quantum states as in the main text in Eq. \eqref{qstates} corresponding to inputs $x_{i,s}$ as
\begin{eqnarray}\label{qstatesapp}
\ket{\psi_{x_{i,0},x_{i,1}}}=\sigma_x^{x_{i,0}} \sigma_z^{x_{i,1}}
\left(
\cos\frac{\pi}{8}\,\ket{0}
+
\sin\frac{\pi}{8}\,\ket{1}
\right)
\end{eqnarray}
Using the same measurements as Eq. \eqref{meaand},
\begin{eqnarray}
    M_{i,j}^0&=&\proj{i}\otimes\proj{j},\qquad i,j=0,1\nonumber\\
    M_{i,j}^1&=&\proj{+_i}\otimes\proj{+_j},
\end{eqnarray}
we obtain the quantum value of $p_{\mathrm{cor}}$ \eqref{psucapp2} to be $\beta_Q=\frac{1}{40}\left(11+2\sqrt{2}\right)$. Similarly, we can obtain the expression for predictive capability of the perceptron when NAND, NOR type functions are considered. We state them below as a table. Interestingly, for the proposed prediction witness, one can violate the classical bound with the same state \eqref{qstatesapp} and measurements \eqref{meaand} as done above for AND, OR type functions.

\begin{table}[h]
\centering
\renewcommand{\arraystretch}{1.5}
\begin{tabular}{|c|c|c|c|}
\hline
\textbf{Function-Type} & \textbf{Prediction witness} & $\beta_C$ & $\beta_Q$ \\
\hline

AND $(x_1\land x_2)$ 
& 
\(\displaystyle
       p_{\mathrm{cor}}=\frac{1}{80}\sum_s \sum_{\mathbf{x}}\sum_{\mathbf{y_s}\atop y_{1,s}. y_{2,s}=x_{1,s}. x_{2,s}} p(\mathbf{y_{s}}|s,\mathbf{x})
\)
& $\frac{13}{40}$ 
& $\frac{1}{40}\left(11+2\sqrt{2}\right)$ \\

\hline
NAND $(\neg (x_1\land x_2))$ 
& \(\displaystyle p_{\mathrm{cor}}=\frac{1}{80}\sum_s \sum_{\mathbf{x}}\sum_{\mathbf{y_s}\atop y_{1,s}. y_{2,s}=(x_{1,s}. x_{2,s})}p(\mathbf{y_{s}}|s,\mathbf{x})\)
& $\frac{13}{40}$ 
& $\frac{1}{40}\left(11+2\sqrt{2}\right)$ \\
\hline
OR $x_1\lor x_2$ &
\(\displaystyle p_{\mathrm{cor}}=\frac{1}{80}\sum_s \sum_{\mathbf{x}}\sum_{\mathbf{y_s}\atop y_{1,s}\oplus y_{2,s}=x_{1,s}\oplus x_{2,s}} p(\mathbf{y_{s}}|s,\mathbf{x})\)
& $\frac{13}{40}$ 
& $\frac{1}{40}\left(11+2\sqrt{2}\right)$ \\
\hline
NOR $\neg (x_1\lor x_2)$ 
& \(\displaystyle p_{\mathrm{cor}}=\frac{1}{80}\sum_s \sum_{\mathbf{x}}\sum_{\mathbf{y_s}\atop y_{1,s}\oplus y_{2,s}=(x_{1,s}\oplus x_{2,s})} p(\mathbf{y_{s}}|s,\mathbf{x})\)
& \(\displaystyle\frac{13}{40}\) 
& $\frac{1}{40}\left(11+2\sqrt{2}\right)$ \\
\hline
\end{tabular}
\caption{\label{table3} AND/OR type functions implementable in IMP, along with the predictive advantage in the corresponding Q-IMP.}
\end{table}

\subsection{$f$: Assymetric type}

Let us now consider the case where $f$ is an asymmetric type function $(\neg x_1\land x_2)$ [c.f. Table \ref{table1}]. We can conclude that in this example $\mathcal{K}_{z_s^{\mathrm{cor}}}$ is given by  $\mathcal{K}_{z_s^{\mathrm{cor}}}=\{y_{1,s},y_{2,s}|(\neg y_{1,s}). y_{2,s}=z_s^{\mathrm{cor}}= (\neg x_{1,s}).x_{2,s}\}$. Consequently, the prediction witness in this case is the same as AND type function \eqref{psucand} given by
\begin{eqnarray}\label{psucapp2}
 p_{\mathrm{cor}}=\frac{1}{80}\sum_s \sum_{\mathbf{x}}\sum_{\mathbf{y_s}\atop y_{1,s}. y_{2,s}=x_{1,s}. x_{2,s}} p(\mathbf{y_{s}}|s,\mathbf{x}).
\end{eqnarray}
Similarly, we can obtain the expression for predictive capability of the perceptron when the rest of asymmetric type functions are considered. We state them below as a table. Interestingly, for the proposed prediction witness, one can violate the classical bound with the same state \eqref{qstatesapp} and measurements \eqref{meaand} as done above for AND/OR type functions.

\begin{table}[h]
\centering
\renewcommand{\arraystretch}{1.5}
\begin{tabular}{|c|c|c|c|}
\hline
\textbf{Function-Type} & \textbf{Prediction witness} & $\beta_C$ & $\beta_Q$ \\
\hline

Assymetric $(\neg x_1\land x_2)$ 
& 
\(\displaystyle
       p_{\mathrm{cor}}=\frac{1}{80}\sum_s \sum_{\mathbf{x}}\sum_{\mathbf{y_s}\atop y_{1,s}. y_{2,s}=x_{1,s}. x_{2,s}} p(\mathbf{y_{s}}|s,\mathbf{x})
\)
& $\frac{13}{40}$ 
& $\frac{1}{40}\left(11+2\sqrt{2}\right)$ \\

\hline
Assymetric $(x_1\land \neg x_2)$ 
& \(\displaystyle p_{\mathrm{cor}}=\frac{1}{80}\sum_s \sum_{\mathbf{x}}\sum_{\mathbf{y_s}\atop y_{1,s}. y_{2,s}=(x_{1,s}. x_{2,s})}p(\mathbf{y_{s}}|s,\mathbf{x})\)
& $\frac{13}{40}$ 
& $\frac{1}{40}\left(11+2\sqrt{2}\right)$ \\
\hline
Assymetric $\neg x_1\lor x_2$ &
\(\displaystyle p_{\mathrm{cor}}=\frac{1}{80}\sum_s \sum_{\mathbf{x}}\sum_{\mathbf{y_s}\atop y_{1,s}\oplus y_{2,s}=x_{1,s}\oplus x_{2,s}} p(\mathbf{y_{s}}|s,\mathbf{x})\)
& $\frac{13}{40}$ 
& $\frac{1}{40}\left(11+2\sqrt{2}\right)$ \\
\hline
Assymetric $(x_1\lor \neg x_2)$ 
& \(\displaystyle p_{\mathrm{cor}}=\frac{1}{80}\sum_s \sum_{\mathbf{x}}\sum_{\mathbf{y_s}\atop y_{1,s}\oplus y_{2,s}=(x_{1,s}\oplus x_{2,s})} p(\mathbf{y_{s}}|s,\mathbf{x})\)
& \(\displaystyle\frac{13}{40}\) 
& $\frac{1}{40}\left(11+2\sqrt{2}\right)$ \\
\hline
\end{tabular}
\caption{\label{table2} Asymmetric type functions implementable in IMP, along with the predictive advantage in the corresponding Q-IMP.}
\end{table}




\end{document}